\title{\LARGE \bf
Output Regulation for Systems on Matrix Lie-group
}
\date{}
\author[1]{Simone de Marco \thanks{simone.demarco2@unibo.it}}
\author[1]{Lorenzo Marconi \thanks{lorenzo.marconi@unibo.it}}
\author[2]{Tarek Hamel\thanks{thamel@i3s.unice.fr}}
\author[3]{Robert Mahony\thanks{Robert.Mahony@anu.edu.au}}
\affil[1]{CASY-DEI, Universit\'a di Bologna, Bologna, 40133, Italy}
\affil[2]{I3S, University of Nice, Sophia Antipolis, France}
\affil[3]{Centre of Excellence in Robotic Vision, Research School of Engineering, Australian National University, Canberra, Australia}
\newcommand{\ssum}[2]{\overset{#2}{\underset{#1}{\sum}}}   
\newcommand{\vect}{\mbox{\rm vec}}
\newcommand{\RR}{\mathbb R}
\newcommand{\dnorm}[1]{\Vert #1 \Vert}
\newcommand{\norm}[1]{\left|#1\right|}
\def\be{\begin{equation}}
\def\eeq{\end{equation}}
\def\ba{\begin{array}}
\def\ea{\end{array}}       
\newtheorem{thm}{Theorem}
\newtheorem{rem}{Remark}
\newtheorem{prop}{Proposition}  
\newtheorem{assum}{Assumption}    
\newcommand{\SO}{\mathrm{SO}}
\newcommand{\SL}{\mathrm{SL}}
\newcommand{\SE}{\mathrm{SE}}
\newcommand{\so}{\mathfrak{so}}
\newcommand{\gothg}{\mathfrak{g}}
\newcommand{\grpG}{\mathbf{G}}
\newcommand{\R}{\mathbb{R}}
\newcommand{\Id}{I}
\newcommand{\EE}{E}
\newcommand{\ee}{e}
\newcommand{\cost}{\ell} 
\newcommand{\pr}{\mathbb{P}}
\DeclareMathOperator{\vrp}{vrp} 
\DeclareMathOperator{\mrp}{mrp} 
\newcommand{\Vrp}{\mathfrak{v}} 
\DeclareMathOperator{\tr}{tr}
\newcommand{\trace}[1]{\tr\left(#1\right)}
\DeclareMathOperator{\diag}{diag}
\DeclareMathOperator{\Ad}{Ad}
\mathchardef\mhyphen="2D  
\begin{document}
\maketitle
\thispagestyle{empty}
\pagestyle{empty}

\begin{abstract}                          
\textbf{This paper deals with the problem of output regulation for systems defined on matrix Lie-Groups. Reference trajectories to be tracked are supposed to be generated by an exosystem, defined on the same Lie-Group of the controlled system, and only partial relative error measurements are supposed to be available. These measurements are assumed to be invariant and associated to a group action on a homogeneous space of the state space. In the spirit of the internal model principle the proposed control structure embeds a copy of the exosystem kinematic. This control problem is motivated by many real applications fields in aerospace, robotics, projective geometry, to name a few, in which systems are defined on matrix Lie-groups and references in the associated homogenous spaces.}
\end{abstract}

\section{Introduction}
Regulating the output of a system in such a way to achieve asymptotic tracking of a reference trajectory is a central topic in control. Among the different approaches proposed so far  in the related  literature, the output regulation undoubtedly plays a central role. The main peculiarity of output regulation is the fact of considering references to be tracked as belonging to a family of trajectories generated as solutions of  an autonomous system (typically referred to as exosystem). Tackling this problem in error feedback contexts typically leads to solutions in which the regulator embeds a copy of the exosystem properly updated by means of error measurements.
 The problem has been intensively studied in the linear context in the mid seventies by Davison, Francis and Wonham \cite{FrancisWonham},\cite{Francis},\cite{Davison}, and then extended to a quite general nonlinear context by Isidori and Byrnes \cite{Isidori}. In both the linear and nonlinear framework the developed theory has led to the celebrated internal model principle. Assumptions of the aforementioned pioneering works were weakened recently \cite{isidoriHighGain},  \cite{MarconiAdaptiveObserver}  thanks to the observation that the problem of output regulation can be cast as the problem of nonlinear observers design. In fact, the many tools available in literature  for designing nonlinear observers, such as high-gain and adaptive nonlinear observers, have been used in the design of internal model-based regulators by thus widening the class of systems that can dealt with. It is worth noting, however, that most of the frameworks considered so far for output regulation deal with systems and exosytems defined on real state space and not much efforts have been done to extend the results of output regulation  to systems and exosystems whose states live in more general manifolds. 

Regarding control and observation of systems defined in more general manifolds, the synthesis of nonlinear observers for invariant systems on Lie-groups has recently played an important role.  Many physical systems, such as aerial vehicles, mobile robotic vehicles, robotic manipulators, can be described by geometric models with symmetries. Symmetric structures reflect the fact that the behavior of a symmetric system in one point in the space is independent from the choice of a  particular set of configuration coordinates. That is, the laws of motion of a symmetric system are invariant under a change of the configuration space. Preserving such a symmetry is undoubtedly a key point in the observer design. Aghannan and Rouchon \cite{Aghannan} first had pointed out the main role of invariance in the observer design for mechanical systems with symmetries. More recent works, based on the aforementioned paper, take advantage of the symmetry of left invariant systems on Lie-groups to define invariant error coordinates in order to build an invariant observer \cite{Bonnabel}, \cite{BonnabelLie}, \cite{MinhDuc}, \cite{Trumpf2012}. Lageman et. al  \cite{Lageman2009}, \cite{Lageman2010} exploit the invariance properties of the system in order to have autonomous invariant error dynamics. 

This paper presents  an attempt to extend the idea of internal model-based control to systems defined on more general state spaces, precisely systems defined on matrix Lie-Groups. 
This control problem is motivated by a wide range of real world applications in which both the controlled system and the exosystem are modeled on matrix Lie-groups. For example, the attitude control problem for satellite \cite{BulloPHD}, in which the system and the exosystem are defined on the group of rigid-body rotations $\SO(3)$ or the control \cite{MarconiRobustGuidance} of VTOL (Vertical Takeoff and Landing), whose kinematic equations are described by the group of rigid-body translation $\SE(3)$.
The relevance of this control problem is motivated also by projective geometry problem such as image homographies.
This geometry can be modeled by the special linear group $\SL (3)$ \cite{YMa}, \cite{Mahony2012} and in this context the control problem is to continuously warp a reference image of a reference scene in such a way the  ``controlled" homography matches an image sequence taken by a moving camera.
 
In this paper, we consider reference trajectories to be tracked modelled on the same Lie-group of the controlled system. The exosystem (the reference system) is assumed to be right invariant while the controlled system is assumed to be left invariant. We assume that only partial relative error information are available for measurements. The measurements are assumed to be associated with a left group action on a homogeneous space of the state space. 
 The paper extends preliminary results presented in \cite{ACC15} where the problem of output regulation for systems defined on matrix Lie-Group was tackled under the assumption that the exosystem has constant velocity. 
 Here we remove the limitations of \cite{ACC15} by considering a generic exosystem structure. The main result of the  paper is to propose a general structure of the regulator that depends only on relative measures in homogenous space such that the closed-loop system tracks homogenous references generated by the exosystem with a certain domain of attraction.  This is achieved by means of a regulator embedding a copy of the exosystem fed by a non-linear function of the measured output  along with a stabilising control action. Going further, a control design based on \emph{back-stepping} techniques, in the special case of systems defined on the special orthogonal group $\SO(3)$, is presented  for the fully actuated dynamic model. 
 
The paper has six sections and it's organised as follow. Section \ref{NotationAndProblemFormulation} presents mathematical preliminaries, notation and the problem formulation. Section \ref{OutputRegulation} presents the main result of the paper, a theorem for kinematic systems on matrix Lie-Groups with invariant relative error measurements and its proof. In Section \ref{SO3case} a stability analysis is provided for the specific case of systems defined on the \emph{special orthogonal group} extending local results of Section \ref{OutputRegulation} to almost global results. In the same section is also proposed a regulator design, based on \emph{back-stepping} techniques, for fully actuated dynamic systems whose kinematic state space is posed on the Lie-Group of orthogonal rotations $\SO(3)$. An illustrative example with simulations is shown in Section \ref{Simulation Results}.
 
 \section{Notation and Problem Formulation} \label{NotationAndProblemFormulation}
\subsection{Notation and basic facts}
Let $\grpG$ be a general matrix Lie-group. For $X\in\grpG$, the group inverse element is denoted by $X^{-1}$, and $I$ is the identity element of $\grpG$. Let $\gothg$ be the Lie-algebra associated to the Lie-group $\grpG$.
For $X\in \grpG$ and $U\in \gothg$, the $\Ad_X U$ denotes the adjoint operator, namely the mapping $\Ad_X:\grpG \times \gothg \rightarrow \gothg$ defined as
$$\Ad_X U:=XUX^{-1},\quad  \text{with} \>\, X \in \grpG, \>U \in \gothg \,.$$
Let $\trace{A}$ denote the trace operator and note that for any $A,B \, \in \R^{n\times n}$,  $\trace{A^\top B}$ defines an inner product on $\R^{n \times n}$.
For any $A \in \R^{n \times n}$ and $B \in \R^{n \times n}$
$$\trace{A^\top B}=\vect (A)^\top\vect(B)$$
where $\vect(A) \in \R^{n^2}$ is the column vector obtained by the concatenation of columns of the matrix $A$ as follows
$$\vect(A)=[a_{1,1},...,a_{n,1},a_{1,2},...,a_{n,2},...,a_{1,n},...,a_{n,n}]^{\top}.$$
For all $A \in \R^{n \times n}$  let $\pr_{\gothg}(A)$ denote the orthogonal projection of $A$ onto $ \gothg $ with respect to the trace inner product. For any $U \in \gothg$ and any $A \in \R^{n \times n}$, one has
$$\trace{U^\top A}=\trace{U^\top \pr_\gothg(A)}\, .$$ 
We denote by $\mrp$ (matrix representation) the mapping $\mrp: \R^n \rightarrow \gothg$ that maps the vector $v \in \R^n$ in an element of the algebra.
Let $\vrp$ (vector representation) denote the inverse of $\mrp$ operator, namely
\[\ba{rclrcl}
\vrp(\mrp(v))&=&\> \>v,\quad &v&  \in \R^n \\
\vrp(U) &=& U_\Vrp,\quad &U& \in \gothg \, .
\ea\]
With $\Omega \in \RR^3$ we denote by $\Omega_\times$ the matrix representation of $\Omega$ in the Lie-algebra $\so(3)$ associated to the Lie-group $\SO(3)$. Namely, 
 if $\Omega=[\Omega_1, \Omega_2, \Omega_3]^\top$, then 
$$
\Omega_\times=
\begin{bmatrix}
0                  &    -\Omega_3  &   \Omega_2\\
\Omega_3    &                   0  &  -\Omega_1\\
-\Omega_2   &      \Omega_1 &                 0
\end{bmatrix}.
$$
Let $U \in \gothg$ be an $n \times n$ matrix and let $U_\Vrp \in \R^k$ with $n^2 \geq k$. Then, it is straightforward to verify that $\mbox{vec}(U)$ is a linear combination of the vectorial representation $\vrp(U)$ and there exists a duplication matrix $D \in \R^{n^2 \times k}$ such that
$$\vect(U)=DU_\Vrp \, .$$
With $U \in \gothg$ and $V \in \gothg$ elements of the same Lie-algebra one has
$$\trace{U^\top V}=\vrp^\top(U) Q_{\gothg} \vrp (V)$$
where $Q_{\gothg}=D^T D$ and $D$ is the duplication matrix defined above.

\subsection{Problem formulation}
In this paper we consider the left invariant kinematic system
\be\label{System}
\dot{X}=XU,	\quad X(0) \in \grpG
\eeq
$X \in \grpG$ the state of the system and $U \in \gothg$ the control input.
We consider a right invariant exosystem on $\textbf{G}$  of the form
\begin{subequations}\label{ExoSystem}
\be
  \dot X_d={}^\circ U_d X_d
\eeq
\be
\quad {}^\circ U_d =\mrp(Cw)
\eeq
\be
\!\!\!\! \dot w= Sw
\eeq
\end{subequations}
where $X_d \in \grpG$ and ${}^\circ U_d \in\gothg$ are $n \times n$ matrices, $\vrp({}^\circ U_d) \in \R^\kappa$,$w \in \R^m$, $C \in \R^{\kappa \times m}$, and  $S \in \R^{m\times m}$ with $m \geq \kappa$ and $S=-S^{\top}$.
The input $w$ models exogenus signals that represent velocity references to be tracked.
Note that $U \in \gothg$ in \eqref{System} is associated to a body-fixed velocity while ${}^\circ U_d \in \gothg$ of the exosystem \eqref{ExoSystem} is associated with an inertial velocity input.
Consider a linear left group action\footnote{In the paper  a linear left group action is considered since it is the natural action for the Lie-groups $\SO(n)$ and $\SE(n)$.  All results presented, however,  hold also for a generic left group action.} of $\textbf{G}$ on $\RR^n$, $(X,y)\mapsto Xy$ and
reference vectors in the form
$$
y_i=X^{-1} X_d \mathring y_i, \qquad i=1,2,...,\nu
$$
where $\mathring y_i$ are known constant reference vectors, elements of the homogenous space associated to the Lie-group $X$.
Let
\be\label{stateError}
\EE=X^{-1}X_d
\eeq
denotes the state error of the system as an element of the group $\grpG$.
Let $X_r \in \grpG$ be a constant â ``reference" element of the group.  $X_r$ can be applied to $\mathring y_i$ to generate constant reference vectors
$$
y_i^r=X_r \mathring y_i \> .
$$
Define an â ``error vector" $e_i$ by
\be
\ee_i=y_i^r-y_i=y_i^r-E_r y_i^r, \quad \mbox{\rm where} \quad E_r=EX_r^{-1} \> .
\eeq
The control problem considered is the design of a feedback control action $U$ as a function of $X$ and $y_i$, in such a way the error $e_i$ converges to zero with a certain domain of attraction.

To have a physical intuition of those reference  and error vectors we address the specific case of $\SE (2)$. In particular (see Figure \ref{reference vectors}), we consider two mobile robots modeled as kinematic systems. The first one, the exosystem, is moving along a certain trajectory while the controlled vehicle measures its relative position with respect to the exosystem.
In this context $\mathring y_i$  is a known point in the frame associated to the exosystem mobile robot (for example the coordinate of  $\mathring y_1=[1, 0, 1]^\top$) in its own body-fixed frame with an offset given by $X_r$.
The known vector $X_d \mathring y_1$ is the inertial representation of $\mathring y_1$, while $\EE_r \mathring y_1$ expresses the vector $\mathring y_1$ in the frame associated to the actual robot.
In this context the control objective is to drive the actual mobile robot with the velocity as control input in order to track the exosystem (the desired one) in such a way the error  $y_1^r-\EE_r y_1^r$ converges to zero with a certain domain of attraction.

\begin{figure}[h]
\begin{center}
    \includegraphics[width=.45\textwidth]{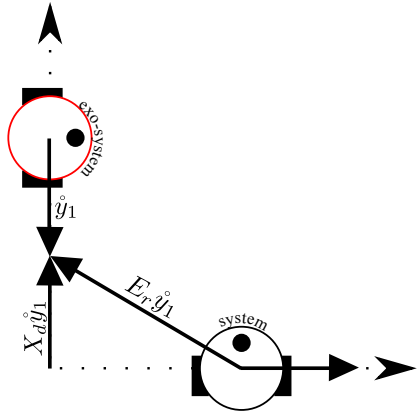}
    \caption{Reference vectors and error vectors in $\SE(2)$.The Inertial reference frame is represented with dotted lines.}
    \label{reference vectors}
\end{center}
\end{figure}

The following assumptions will be used in the next sections and they are instrumental for stability analysis.
\begin{assum} \label{assumption_measures}
There are sufficient independent measurements $y_i$ with $i=1,  \ldots ,\nu$ such that
\be \label{ll}
\cost(\EE):= \dfrac{1}{2}\ssum{i=1}{\nu}\dnorm{\ee}_i^2=\dfrac{1}{2}\ssum{i=1}{\nu}\dnorm{y_i^r-\EE_r y_i^r}^2
\eeq
is locally positive definite in $\EE_r \in \grpG$ around the identity matrix $\EE_r=\Id$.
\end{assum}
\begin{assum} \label{assumption_Compact}
There exist a compact set $\mathcal{W}_d \subset \grpG \times \gothg$ which is invariant for \eqref{ExoSystem}.
\end{assum}

\section{A General Output Regulation Result} \label{OutputRegulation}
The main goal of this section is to present a general structure of the regulator that solves the problem of output regulation formulated above.
\begin{thm} \label{MainTh}
Consider system \eqref{System} along with exosystem \eqref{ExoSystem}.  Let the controller be given by the control law
\begin{subequations} \label{eqTheorem}
\begin{align}
U &=Ad_{X^{-1}} {\Delta}- k_{p} \ssum{i=1}{\nu} \pr \left(\ee_i (y_i^{r}-e_i)^\top  \right) \label{U_TH}
\\
\Delta & = \mrp(C\delta)  \label{Delta_TH}
\\[0.5em]
\dot \delta &=S \delta+C^{\top} \, Q \vrp(\beta) \label{delta_TH}
\\
\beta &=-k_I\ssum{i=1}{\nu} \pr(X^{-\top} \ee_i (y_i^{r}-e_i)^\top X^{\top}) \label{beta_TH}
\end{align}
\end{subequations}
with $k_{p}$ and $k_I$ some positive gains.
If Assumptions \ref{assumption_measures} and \ref{assumption_Compact} hold then the compact set
\begin{multline*}
\! \mathcal{S}=\{(X,\delta,(X_d,w)) \in \grpG \times \R^m \times \mathcal{W}_d \>:\\
\> X^{-\!1}\!X_d=X_r, \> \delta=w\}
\end{multline*}
is locally asymptotically stable for the closed-loop system.
Furthermore
\be\label{eInf}
(X,\delta,(X_d,w)) \in \mathcal{S}\Rightarrow e_i=0 \quad \forall i=1,\ldots,\nu \> .\\[0.5em]
\eeq
\end{thm}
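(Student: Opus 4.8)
The plan is to pass to the error variables $E_r=E X_r^{-1}$ and $\tilde w:=w-\delta$, to combine the cost $\cost$ with a quadratic term in $\tilde w$ into a single Lyapunov function, and to conclude with LaSalle's invariance principle; throughout I write $\langle A,B\rangle:=\trace{A^\top B}$. First I would compute the closed-loop error dynamics: differentiating \eqref{stateError} along \eqref{System}--\eqref{ExoSystem} gives $\dot E=(\Ad_{X^{-1}}{}^\circ U_d-U)E$, so right-multiplication by the constant $X_r^{-1}$ yields $\dot E_r=\bar U\,E_r$ with $\bar U:=\Ad_{X^{-1}}{}^\circ U_d-U\in\gothg$. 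Substituting the controller \eqref{U_TH}--\eqref{Delta_TH} and using ${}^\circ U_d-\Delta=\mrp(C(w-\delta))$ gives $\bar U=\Ad_{X^{-1}}\mrp(C\tilde w)+k_p M$ with $M:=\ssum{i=1}{\nu}\pr\!\big(e_i(y_i^{r}-e_i)^\top\big)$. Since $y_i=E_r y_i^{r}$, equivalently $e_i=(\Id-E_r)y_i^{r}$, one gets $\dot e_i=-\bar U(y_i^{r}-e_i)$, and subtracting \eqref{delta_TH} from $\dot w=Sw$ gives the internal-model error dynamics $\dot{\tilde w}=S\tilde w-C^\top Q\,\vrp(\beta)$.

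Next I would take $\Lyap:=\cost(E)+\tfrac{1}{2k_I}\tilde w^\top\tilde w$, which by Assumption \ref{assumption_measures} is positive definite relative to $\mathcal{S}$ and vanishes on it. The core of the proof is that the cross terms in $\dot\Lyap$ cancel. Differentiating $\cost$ and using $\bar U\in\gothg$ to replace $e_i(y_i^{r}-e_i)^\top$ by its $\gothg$-projection gives $\dot\cost=-\langle M,\bar U\rangle=-k_p\dnorm{M}^2-\langle M,\Ad_{X^{-1}}\mrp(C\tilde w)\rangle$. For the second term, $S=-S^\top$ kills $\tilde w^\top S\tilde w$, and the identity $\trace{U^\top V}=\vrp^\top(U)Q_\gothg\vrp(V)$ together with $C\tilde w=\vrp(\mrp(C\tilde w))$ rewrites $\tfrac1{k_I}\tilde w^\top\dot{\tilde w}$ as $-\tfrac1{k_I}\langle\mrp(C\tilde w),\beta\rangle$; inserting \eqref{beta_TH} and using cyclicity of the trace, the conjugation by $X^{\top}$ there converts each summand into $\langle\Ad_{X^{-1}}\mrp(C\tilde w),\,e_i(y_i^{r}-e_i)^\top\rangle$, so this term equals $+\langle\Ad_{X^{-1}}\mrp(C\tilde w),M\rangle$. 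Adding the two contributions,
\be
\dot\Lyap=-k_p\dnorm{M}^2\le 0 .
\eeq

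Finally I would invoke LaSalle. By Assumption \ref{assumption_Compact} the exosystem component $(X_d,w)$ remains in a compact set, so small sublevel sets of $\Lyap$ are precompact and trajectories converge to the largest invariant set inside $\{M=0\}$. A short computation gives $D\cost(E_r)[\eta E_r]=-\langle M,\eta\rangle$ for all $\eta\in\gothg$, so $M=0$ characterizes the critical points of $\cost$ in $E_r$; by Assumption \ref{assumption_measures} the only such point near $\Id$ is the strict minimizer $E_r=\Id$, i.e. $e_i\equiv0$. On the invariant set this forces $\dot E_r=\bar U E_r\equiv0$, hence $\bar U\equiv0$ and $\mrp(C\tilde w)\equiv0$; since moreover $e_i\equiv0$ makes $\beta\equiv0$, we get $\dot{\tilde w}=S\tilde w$ with $C\tilde w\equiv0$, and observability of the exosystem pair $(C,S)$ (the usual minimality requirement) then yields $\tilde w\equiv0$. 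Thus the invariant set is exactly $\mathcal{S}$, which is therefore locally asymptotically stable. The implication \eqref{eInf} is immediate, since on $\mathcal{S}$ one has $E=X_r$, hence $E_r=\Id$ and $e_i=(\Id-E_r)y_i^{r}=0$.

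I expect the main obstacle to be the cancellation in $\dot\Lyap$: it takes some care with $\Ad$, with the projection $\pr$, and with the trace identities to see that the conjugated form $X^{-\top}(\cdot)X^{\top}$ in $\beta$ is exactly the right feedback shape, and to check that inserting $\pr$ inside $\beta$ costs nothing (which rests on $\bar U$ and $\Ad_{X^{-1}}\mrp(C\tilde w)$ lying in $\gothg$). Once that is in place, the invariant-set analysis is comparatively routine given the gradient interpretation of $M$, apart from the standard exosystem minimality hypothesis needed to drive $\tilde w$ to zero.
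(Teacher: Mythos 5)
Your proof is correct and follows essentially the same route as the paper: the same Lyapunov function $\Lyap=\cost+\tfrac{1}{2k_I}\tilde w^\top\tilde w$, the same cancellation of the cross term via the trace/$\vrp$ identities and the $\Ad$-conjugation built into $\beta$, and the same boundedness-plus-LaSalle conclusion (which the paper merely delegates to the argument of Theorem~3.1 of the cited conference version). Your explicit remark that driving $\tilde w\to 0$ requires observability of the pair $(C,S)$ is a point the paper leaves implicit in that delegation, and it is indeed needed for attractivity of the $\delta=w$ component of $\mathcal{S}$.
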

Note that in the specific case of exosystem constant velocity, namely $S=0, \> C=I$, the control law proposed in \eqref{eqTheorem} is reduced to the PI control action presented in \cite{ACC15} Theorem 3.1.

\begin{rem}
Local properties of Theorem \ref{MainTh} can be extended to global ones exploiting the particular structure of the Lie Group considered. We will show in the next section that, for the specific case of systems posed on the special orthogonal group $SO(3)$, the control law in \eqref{eqTheorem} achieve almost global stability of the error system dynamics. In the special orthogonal group of rotation there is a topological obstruction that deny the possibility to have global results with a smooth control action \cite{Obstruction}. To overcame this topological constraint an hybrid control law is needed \cite{Hybrid}.
\end{rem}
\begin{proof} It's straightforward  to verify that condition \eqref{eInf} follows from the definition of the set $\mathcal{S}$ and of $\ee_i$'s.
In what follow we prove that $\mathcal{S}$ is locally asymptotically stable.
Consider the following candidate Lyapunov function
\be\label{LyapCandidate}
\mathcal{L}=\underbrace{ \dfrac{1}{2} \ssum {i=1}{\nu} \dnorm {\ee_i}^2}_{\mathcal{L}_1}+ \underbrace{ \dfrac{1}{2k_I} \tilde{w}^{\top}\tilde{w}}_{\mathcal{L}_2}
\eeq
where $\tilde{w}=w-\delta$, which, by assumption,  is positive definite around $\EE_r=\Id$.
Taking the derivatives along the solution of \eqref{System} and \eqref{ExoSystem} one obtains
\be\label{Lderivative}
\dot{\mathcal{L}}=\dot{\mathcal{L}_1}+\dot{\mathcal{L}_2}.
\eeq
The first element of the equation above can be written as
\be\label{L1derivative}
\dot{\mathcal{L}_1}= \dfrac{1}{2} \ssum{i=1}{\nu} \dfrac{d}{dt}\dnorm{y_i^r-\EE_ry_i^r}^2
=-\ssum{i=1}{\nu}e_i^{\top}\dot E_r y_i^r \> .
\eeq
Bearing in mind the definition of $E_r$ in \eqref{stateError}, it turns out that the time derivative of $E_r$ is given by
\be\ba{lll} \label{ErrorDynamic}
\dot \EE_r &= \dfrac{d}{dt}(X^{-1} X_dX_r^{-1})
\\
&=(X^{-1}(- \dot X) X^{-1})X_d X_r^{-1}+X^{-1}\dot {X_d} X_r^{-1}
\\
&=-X^{-1}XUX^{-1}X_dX_r^{-1}-X^{-1} {}^\circ U_d X_dX_r^{-1}
\\
&=-(UX^{-1}-X^{-1}{}^\circ U_d XX^{-1})X_d X_r^{-1}
\\
&=-(U-Ad_{X^{-1}}{}^\circ U_d )\EE_r
\ea\eeq
and, substituting $\dot{\EE_r}$ into \eqref{L1derivative}, one obtains
\be\ba{rcl}\label{L1Adj}
\dot{\mathcal{L}_1}&=&\ssum{i=1}{\nu}\ee_i^{\top}(U-Ad_{X^{-1}}{}^\circ U_d )\EE_r y_i^r\\
&=&\ssum{i=1}{\nu}\ee_i^{\top}(U-Ad_{X^{-1}}\tilde{\Delta}-Ad_{X^{-1}}{\Delta})\EE_r y_i^r
\ea\eeq
where
\[
\tilde{\Delta}={}^\circ U_d -{\Delta} \> .
\]
Now let's focus on the \textit{i}'th term of \eqref{L1Adj}. It follows \\[0.5em]
\[\ba{rcl}
\ee_i^{\top}\left(U-Ad_{X^{-1}}\tilde{\Delta}-Ad_{X^{-1}}{\Delta}\right)\EE_r y_i^r=\trace{\ee_i^{\top}\left(U-Ad_{X^{-1}}\tilde{\Delta}-Ad_{X^{-1}}{\Delta}\right)\EE_r y_i^r}
\ea\]


\begin{multline*}
=\trace{\left(U-Ad_{X^{-1}}{\Delta}\right)^{\top}\pr \left( \left(\EE_r y_i^{r}  \ee_i^\top\right)^\top\right)}
-\trace{\tilde{\Delta}^{\top}  \pr \left( X^{-\top} \left(\EE_r y_i^{r}  \ee_i^\top\right)^\top X^{\top}\right)}
\end{multline*}
where in the last equation above it has been introduced the projection $\pr$ associated with the Lie-algebra.
Recalling $\dot{\mathcal{L}}_1$ one has
\begin{multline} \label{L1dotFinal}
\dot{\mathcal{L}_1}=\ssum{i=1}{\nu}\trace{\left(U-Ad_{X^{-1}}{\Delta}\right)^{\top}\pr \left( \left(\EE_r y_i^{r}  \ee_i^\top\right)^\top\right)}\\
-\ssum{i=1}{\nu} \trace{\tilde{\Delta}^{\top}  \pr \left( X^{-\top} \left(\EE_r y_i^{r}  \ee_i^\top\right)^\top X^{\top}\right)}.
\end{multline}
The second term of \eqref{Lderivative} can be written as
\[
\dot{\mathcal{L}_2}= \displaystyle\frac{1}{k_I} {\tilde{w}}^{\top} \dot{\tilde{w}}\\[0.6em]
=\displaystyle \frac{1}{k_I} {\tilde{w}}^{\top}(Sw-\dot \delta)\\[0.6em]
\]
and, substituting $\dot{\delta}$ from \eqref{delta_TH} into the above equation, one obtains
\[\ba{lll}
\displaystyle \frac{1}{k_I}  {\tilde{w}}^{\top}(Sw-\dot \delta)=\displaystyle \frac{1}{k_I} {\tilde{w}}^{\top}S\tilde{w}-\displaystyle \frac{1}{k_I} {\tilde{w}}^{\top} C^{\top} Q \> \vrp(\beta).
\ea\]
Bearing in mind that $S$ is a skew-symmetric matrix and hence ${\tilde{w}}^{\top}S\tilde{w}=0$,  one gets
\[\ba{l}
\displaystyle \frac{1}{k_I} {\tilde{w}}^{\top}S\tilde{w}-\frac{1}{k_I} {\tilde{w}}^{\top} C^{\top} Q \> \vrp(\beta)
\\
\ba{rcl}
\qquad \qquad \qquad&=&-\displaystyle \frac{1}{k_I} (C{\tilde{w}})^{\top} Q \vrp(\beta)\\[0.6em]
&=&-\displaystyle \frac{1}{k_I}\vrp^\top( {\tilde{\Delta}}) Q	\vrp(\beta)\\[0.6em]
&=&-\displaystyle \frac{1}{k_I}tr( {\tilde{\Delta}}^{\top} \beta) \>.
\ea\ea\]
and, recalling the expression of $\dot{\mathcal{L}}_2$ and the ``innovation term" $\beta$ \eqref{beta_TH}, one has
\be\label{L2derivative}
\dot{\mathcal{L}_2}=\ssum{i=1}{\nu}\trace{\tilde{\Delta}^{\top} \pr \left(X^{-\top} \left(\EE_r y_i^{r}  \ee_i^\top\right)^\top  X^{\top}\right)}.
\eeq
Finally, substituting \eqref{L1dotFinal} and \eqref{L2derivative} into \eqref{Lderivative} and introducing the expression of $U$ \eqref{U_TH}, it yields
\[\ba{rcl}
\dot{\mathcal{L}}&=&\ssum{i=1}{\nu}\trace{\left(U-Ad_{X^{-1}}{\Delta}\right)^{\top}\pr \left( \left(\EE_r y_i^{r}  \ee_i^\top\right)^\top\right)}
\\
&=&\ssum{i=1}{\nu}\trace{-k_{p} \pr \left( \left(\EE_r y_i^{r}  \ee_i^\top\right)^\top\right)^\top \pr \left( \left(\EE_r y_i^{r}  \ee_i^\top\right)^\top\right)}
\\
&=&- k_{p} \ssum{i=1}{\nu} \norm{ \pr \left( \left(\EE_r y_i^{r}  \ee_i^\top\right)^\top\right)}^2.
\ea\]
Since $\mathcal{L}$ is positive definite in the error state and since the exosystem state $(X_d,w)$ lies in a compact set, it follows that the whole state is globally bounded and solutions exist for all time. From this, 
using similar LaSalle arguments to Theorem 3.1 in \cite{ACC15}, it is possible to conclude that the set $\mathcal{S}$ is locally asymptotically stable.
\end{proof}

\section{The Special Case of $\SO(3)$}\label{SO3case}
In this section we consider the specific case of systems defined in $\SO(3)$. By changing the notation used in the previous sections to a more classical notation we denote by $R \in \SO(3)$ the rotation matrix of the system and by $\Omega \in \R^3$ its angular velocity. 
In this section we consider the case in which the kynematic system \eqref{System} is completed with the dynamics equation of motion
\begin{subequations} 
\begin{align}
\dot{R}&=R \Omega_\times  \label{SO3system}
\\
J\dot{\Omega}&=- \Omega \wedge J \Omega + \Gamma. \label{SO3systemDynamic}
\end{align}
\end{subequations} 
Thus modelling a rigid body with control input $\Gamma \in \RR^3$ and $J$ the inertia matrix. In this context the rotation velocity $\Omega$ is a state component of the system.
In this framework the exosystem is described by
\be\ba{rcl}\label{SO3exo-system}
\dot{R}_d&=&{}^\circ \Omega_{d\times} R_d
\\
{}^\circ \Omega_{d\times}&=&(Cw)_\times
\\
\dot{w}&=&Sw
\ea\eeq
where $R_d$ is the desired orientation and ${}^\circ \Omega_d$ is the desired angular velocity in the inertial reference frame.
In order to deal with the new control input $\Gamma$, we split the problem into two sub-problems following the backstepping paradigm \cite{Khalil}.
In the first sub-problem, Section \ref{OmegaVirtual}, we consider $\Omega$ as control input instead of a state component of the system and we solve the output regulation problem providing the stability analysis.
In the second one,  Section \ref{BS}, we consider the control input $\Omega\equiv\Omega^c$ as virtual velocity reference.  The virtual control input $\Omega^c$, then,  is backstepped in order to get the real torque input $\Gamma$.

\subsection{Angular velocity $\Omega_\times$ as control input}\label{OmegaVirtual}
In this context it turns out that the control law \eqref{eqTheorem} reads as (with $R_e:=E_r$) 
\begin{subequations} \label{eqPropositionSO3}
\begin{align}
\Omega_{\times}^c &=Ad_{R^\top} {\Delta}_\times+\dfrac{k_{p}}{2} \ssum{i=1}{\nu}  \left(e_i\wedge y_i^{r} \right)_\times \label{Omega_CoSO3}
\\
\Delta_\times & = (C\delta)_\times  \label{Delta_CoSO3}
\\[0.5em]
\dot \delta &=S \delta+C^{\top} \, Q_{\so(3)} \beta \label{delta_CoSO3}
\\
\beta &=\dfrac{k_I}{2} R\ssum{i=1}{\nu} \left(e_i \wedge y_i^{r}\right) \, \label{beta_CoSO3}
\end{align}
\end{subequations}
where $\Omega^c$ is the desired virtual input mentioned earlier.
We continue the analysis under the following assumption.
\begin{assum} \label{Measurement in SO3}
There are two or more linear independent measurements $y_i$ ($i=1,\ldots,\nu; \, \nu \geq 2$) and
the symmetric matrix $$Y=\dfrac{k_{p}}{2} \ssum{i=1}{\nu}  y_i^r y_i^{r^\top }$$ has three distinct eigenvalues.
\end{assum}
Note that Assumption \ref{Measurement in SO3} specializes Assumption \ref{assumption_measures} in the specific case of systems posed on $SO(3)$. Indeed, it is straightforward to verify that two is the minimum number of measurements such that the cost function \eqref{ll} is positive definite around the identity element of the group.

Theorem \ref{MainTh} ensures the local attractiveness of the compact set S in \eqref{SO3set}, exploiting the structure of the special orthogonal group one can shown  further properties of the set, namely almost global stability and local exponential stability. The extensions of the local properties of Theorem \ref{MainTh} in the special case of $\SO(3)$ are stated in the following Proposition. 
\begin{prop} \label{propositionSO3}
Consider the system \eqref{SO3system} along with exosystem \eqref{SO3exo-system} and let the controller be given by \eqref{eqPropositionSO3}.
Let Assumption \ref{Measurement in SO3} holds. Then the set
\begin{multline}\label{SO3set}
\! \mathcal{S}=\{(R,\delta,(R_d, w)) \in \SO(3) \times \R^m \times (\SO(3) \times \R^m) \>:\\
\> R^\top \!R_d=X_r, \> \delta=w\}
\end{multline}
is  almost globally asymptotically and locally exponentially stable for the closed-loop system. 
\end{prop}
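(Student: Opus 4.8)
The plan is to build on the Lyapunov analysis of Theorem \ref{MainTh}, specializing the candidate function $\mathcal{L} = \mathcal{L}_1 + \mathcal{L}_2$ to the $\SO(3)$ setting and then pushing the LaSalle argument further using the rigid geometry of the group. First I would rewrite $\mathcal{L}_1 = \tfrac12\sum_i \|y_i^r - R_e y_i^r\|^2$ in terms of the symmetric matrix $Y = \tfrac{k_p}{2}\sum_i y_i^r y_i^{r\top}$, obtaining (up to a constant) $\mathcal{L}_1 = \tfrac{2}{k_p}\,\trace{(I - R_e)Y}$, the classical Wahba-type cost on $\SO(3)$. The key algebraic fact is that the gradient term appearing in the control law, $\sum_i (e_i \wedge y_i^r)$, equals (up to scale) $\vex(Y R_e - R_e^\top Y)$, so that along closed-loop solutions $\dot{\mathcal{L}} = -k_p \sum_i |\pr_{\so(3)}(\cdot)|^2 = -\tfrac{1}{k_p}|\vex(YR_e - R_e^\top Y)|^2 \le 0$, exactly as in the general theorem. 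Since the exosystem state lives in the compact set $\SO(3)\times\R^m$, all solutions are bounded and complete, so LaSalle applies on the largest invariant set contained in $\{\dot{\mathcal{L}} = 0\}$.

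Next I would characterize that invariant set. The condition $\dot{\mathcal{L}} = 0$ forces $Y R_e = R_e^\top Y$, i.e. $YR_e$ symmetric. For $Y$ with three distinct eigenvalues this is the standard computation: in the eigenbasis of $Y$, $YR_e$ symmetric forces $R_e$ to be diagonal with $\pm 1$ entries and determinant $+1$, hence $R_e \in \{I, \mathrm{diag}(1,-1,-1), \mathrm{diag}(-1,1,-1), \mathrm{diag}(-1,-1,1)\}$ — the identity plus the three $\pi$-rotations about the eigenaxes. On this set the control law gives $\Omega^c_\times = \Ad_{R^\top}\Delta_\times$, so the error dynamics \eqref{ErrorDynamic} reduce to $\dot R_e = -(\Omega^c_\times - \Ad_{R^\top}{}^\circ\Omega_{d\times})R_e = \Ad_{R^\top}\tilde\Delta_\times R_e$ with $\tilde\Delta = {}^\circ\Omega_d - \Delta$. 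One must then show invariance additionally forces $\tilde w = w - \delta = 0$: on the invariant set $\dot{\mathcal{L}}_2 = 0$ combined with $\dot{\mathcal{L}}_1 = 0$ and re-examination of \eqref{L1dotFinal}–\eqref{L2derivative} pins down $\tilde\Delta = 0$ at the equilibria, hence $R_e$ constant and $\tilde w$ constant; feeding this back through \eqref{delta_CoSO3}, constancy of $\delta$ with $\dot\delta = S\delta + C^\top Q\beta$ and $\beta = 0$ at $R_e \in \{I, \pi\text{-rotations}\}$, together with the internal-model structure $\dot w = Sw$, yields $\tilde w = 0$. Thus the invariant set is $\mathcal{S}$ together with three additional isolated equilibrium manifolds $\mathcal{S}_k$ where $R_e$ is a $\pi$-rotation and $\delta = w$.

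Then I would show the three extra equilibria are unstable (hyperbolic saddles), so their union of stable manifolds has measure zero; this gives almost global asymptotic stability of $\mathcal{S}$. The standard tool is to linearize the reduced error dynamics at each $R_e = R_k$ (a $\pi$-rotation about the $j$-th eigenaxis) and check that the Hessian of $\mathcal{L}_1$ there is indefinite — which is exactly where the hypothesis of three \emph{distinct} eigenvalues of $Y$ is used, since it makes every critical point nondegenerate and only $I$ a minimum. Concretely, the linearization of $\dot R_e$ around $R_k$ has the form $\dot{\zeta} = -k_p\, H_k\,\zeta$ (modulo the coupling to the $\tilde w$ dynamics, which enters at higher order or in a triangular way), where $H_k = \tfrac12(\trace{Y}I - Y)$ conjugated appropriately restricted to $\so(3)\cong\R^3$; distinct eigenvalues of $Y$ make $H_k$ have both positive and negative eigenvalues at $k \ne 0$ and be positive definite at $k = 0$. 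For local exponential stability of $\mathcal{S}$ itself, positive definiteness of $H_0$ plus skew-symmetry of $S$ (so the $\tilde w$-subsystem contributes a nonincreasing, and in the reduced dynamics asymptotically vanishing, quadratic term) gives an exponentially decaying quadratic Lyapunov estimate near $\mathcal{S}$; a center-manifold or direct block-triangular argument handles the neutral $S$-directions by invoking that the only invariant signals consistent with $\beta \equiv 0$ force $\tilde w \to 0$.

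The main obstacle I expect is the instability analysis of the undesired $\pi$-rotation equilibria in the presence of the dynamic internal-model state $\delta$: one must ensure the coupling between the attitude error $R_e$ and the velocity-estimation error $\tilde w$ does not stabilize a direction that is unstable in the pure-kinematic ($\delta \equiv w$) problem. I would resolve this by exhibiting, at each bad equilibrium, an explicit unstable eigenvector living purely in the $R_e$-directions (using that the $\tilde w$-dynamics are driven by $\beta$, which vanishes to first order along that direction because $\beta \propto \vex(YR_e - R_e^\top Y)$ is itself the gradient), so the attitude instability persists in the full system. The rest — the LaSalle set computation and the exponential estimate near $\mathcal{S}$ — is routine given Theorem \ref{MainTh} and the classical $\SO(3)$ observer/Wahba-cost literature.
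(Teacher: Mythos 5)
Your overall skeleton matches the paper's: LaSalle on the Lyapunov function of Theorem \ref{MainTh}, identification of the four critical points of the Wahba-type cost ($R_e=I_3$ plus the three $\pi$-rotations about the eigenaxes of $Y$, obtained from $YR_e=R_e^\top Y$), local exponential stability at the identity, and instability of the remaining three equilibria. That part of your plan is sound and is exactly the paper's step 1.

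The genuine gap is in your local exponential stability argument. The set $\mathcal{S}$ includes the condition $\delta=w$, so exponential stability requires exponential decay of $\tilde w$, whose linearized dynamics are $\dot\theta=-\mathcal{C}(t)x$ with no damping in $\theta$: a Lyapunov derivative that is only negative semidefinite (identically zero in the $\theta$-directions) together with ``skew-symmetry of $S$'' can never produce an exponential estimate, and a center-manifold or block-triangular argument would at best give asymptotic convergence of $\tilde w$, not an exponential rate. The mechanism that actually delivers the exponential claim is \emph{persistency of excitation}: the paper puts the linearization into the adaptive-control normal form \eqref{Loria_standard_form} and invokes Theorem 1 of \cite{Loria}, whose key hypothesis is that $\mathcal{B}(t)=\underline{R}R_d$ is uniformly PE (trivially satisfied here since $\mathcal{B}\mathcal{B}^\top=I_3$), together with the existence of the matrices $\mathcal{P},\mathcal{Q}$. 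Your proposal never identifies this PE condition, and without it the exponential part of the statement is unsupported. A secondary weakness is the instability step: because $R_d(t)$ and $\underline{R}(t)$ enter the coupling, the linearization at the undesired equilibria is time-varying, so ``exhibiting an explicit unstable eigenvector'' is not directly meaningful; the paper instead applies Chetaev's theorem with $\mathcal{V}_j=\frac{k_I}{2k_p}x_j^\top\Upsilon_j^\top x_j-\frac{1}{4}\norm{\theta_j}^2$, whose derivative $\frac{k_I}{k_p}\norm{\Upsilon_j x_j}^2$ is positive on the Chetaev set precisely because $\Upsilon_j$ is nonsingular with a positive eigenvalue --- this is where the three-distinct-eigenvalues hypothesis is used. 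Your Hessian-indefiniteness observation is correct in spirit, but by itself it does not dispose of the time-varying coupling to $\theta$.
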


Note that Assumption \ref{assumption_Compact} is no more needed in Proposition \ref{propositionSO3} since the special orthogonal group $\SO(3)$ is a compact manifold.
\begin{proof}
It is straightforward to verify that, under Assumption \ref{Measurement in SO3}, the Lyapunov function \eqref{LyapCandidate} is positive definite and $\mathcal{L}(0,0)=0$.

 In what follows we proceed by steps showing that:
 \begin{enumerate}
 \item The dynamic of the group error for the closed-loop system has only four isolated equilibrium points $(R_e, \Delta)=(R_{ej}^*,0)$, $j=1,\ldots,4$.
 \item The equilibrium point $(R_{e1}^*,\Delta)=(I_3,0)$ is locally exponentially stable.
 \item The three equilibria with $(R_{ej}^*,\Delta)\neq (I_3,0)$, $j=2,\ldots,4$ are unstable.
 \end{enumerate}
 Consider the dynamic of the group error for the closed-loop system
\[
 \dot{R}_e=\left( Ad_{R^\top} \tilde{\Delta}_\times+ \dfrac{k_{p}}{2} \ssum{i=1}{\nu}  \left(R_e y_i^{r} \wedge y_i^{r}\right)_\times \right)R_e\,.
 \]
 As consequence of Theorem \ref{MainTh} for $\dot{R}_e=0$ one has that $\tilde{\Delta}=0$, which implies 
 \be\ba{rcl} \label{SO3R_eEquilib}
0&=& \dfrac{k_{p}}{2}  \ssum{i=1}{\nu} \left(R_e^* y_i^{r} \wedge y_i^{r}\right)\\
&=&R_e^* \dfrac{k_{p}}{2} \ssum{i=1}{\nu} y_i^{r} y_i^{r^\top}-\dfrac{k_{p}}{2} \ssum{i=1}{\nu} y_i^{r} y_i^{r^\top} R_e^{*^\top}
=R_e^*Y-YR_e^{*^\top}\!\!.
\ea\eeq
Proceeding like in the proof of Theorem 5.1 in \cite{SO3TarekMahony} one has that $R_e^*Y=YR_e^{*^\top}$ implies either $R_e^*=I_3$ or $tr(R_e^*)=-1$. 
As consequence, $R_e^*$ is a symmetric matrix and there are only four possible values of $R_e^*$ that satisfy eq. \eqref{SO3R_eEquilib}, namely
\[\left\{ 
  \begin{array}{l l}
    R_{e1}^*=I_3&\\
    R_{e2}^*=u_1 u_1^\top-u_2 u_2^\top-u_3 u_3^\top &\\
    R_{e3}^*=-u_1 u_1^\top+u_2 u_2^\top-u_3 u_3^\top &\\
    R_{e4}^*=-u_1 u_1^\top-u_2 u_2^\top+u_3 u_3^\top &
  \end{array} \right.\]
  where $u_1,\,  u_2 , \, u_3$ are the eigenvectors of $Y$ associated to the eigenvalues $\lambda_1, \lambda_2, \lambda_3$ ordered in ascending order. This, in turn,  proves  step 1.
Hereafter we prove item 2 (namely local exponential stability of the set $\mathcal{S}$).
Without loss of generality consider $X_r=I_3, \, C=I_3$. The dynamics of the error and velocity error for the closed-loop system can be written as 
\[\ba{rcl}
\dot{R}_e&=&\left( R_e R_d^\top \tilde{\Delta}+\dfrac{k_{p}}{2} \ssum{i=1}{\nu}  \left(R_e y_i^{r} \wedge y_i^{r}\right)\right)_\times \!\!\! R_e
\\
\dot{\tilde{\Delta}}_\times &=& \left( S \tilde{\Delta} + k_{I} \ssum{i=1}{\nu}  R_d R_e^\top \left(R_e y_i^{r} \wedge y_i^{r}\right) \right)_\times
\ea\]
and, denoting
\[
\underline{\tilde{\Delta}}=\underline{R} \tilde{\Delta}\,,
\qquad 
\underline{\dot{R}}=-\underline{R}S
\]
where $\underline{R} \in \SO(3)$, one gets
\be\ba{rcl} \label{SO3-EquivalentErrorDynamics}
\dot{R}_e&=&\left( R_e R_d^\top \underline{R}^\top \tilde{\underline{\Delta}}+\dfrac{k_{p}}{2} \ssum{i=1}{\nu}  \left(R_e y_i^{r} \wedge y_i^{r}\right)\right)_\times \!\!\! R_e
\\
\dot{\tilde{\underline{\Delta}}} &=&k_{I} \ssum{i=1}{\nu} \underline{R} R_d R_e^\top \left(R_e y_i^{r} \wedge y_i^{r}\right) \, .
\ea\eeq
Consider a first order approximation of $(R_e, \underline{\tilde{\Delta}})$ of equation  \eqref{SO3-EquivalentErrorDynamics} around the equilibrium point $(R_{e1}^*,0)$. It suffices to show that the origin of the linearized system is uniformly exponentially stable in order to prove that the equilibrium $(R_{e1}^*,0)$ is locally exponentially stable.
To this purpose consider the approximation $R_e=I+x_\times$ and $\Delta=\theta$, with $x,\theta \in \RR^3$. By neglecting high order terms one obtains
\be \label{Loria_standard_form}
\begin{bmatrix} \dot{x} \\ \dot{\theta} \end{bmatrix}=
\begin{bmatrix}
\mathcal{A}(t) & \mathcal{B}(t)^\top 
\\
-\mathcal{C}(t) & 0
\end{bmatrix}
\begin{bmatrix}
{x} \\ {\theta}
\end{bmatrix}
\eeq 
where
$$\ba{rcl}
\mathcal{A}(t)&:=&\dfrac{k_{p}}{2} \ssum{i=1}{\nu}  y_{i\times}y_{i\times}, \qquad
\mathcal{B}(t) := \underline{R}R_d,\\
\mathcal{C}(t) &:=& -k_{I} \underline{R} R_d \ssum{i=1}{\nu}  y_{i\times}y_{i\times}.
\ea$$
The proof follows the results derived from Theorem 1 in \cite{Loria} which establish sufficient conditions for the uniform exponential stability of the origin of a linear time-varying system having the standard form in \eqref{Loria_standard_form}.
As a matter of fact, the constant symmetric positive definite matrices $\mathcal{P}=k_{I} \sum_{i=1}^{\nu}  y_{i\times}y_{i\times}^\top$ and $\mathcal{Q}=k_I^{-1}k_p \mathcal{P}^2$ satisfy the conditions $\mathcal{P} \mathcal{B}^\top=\mathcal{C}^\top$ and $-\mathcal{Q}=\mathcal{A}^\top\mathcal{P}+\mathcal{P}\mathcal{A}+\dot{\mathcal{P}}$ in Assumption 2 of  \cite{Loria}. Furthermore,  it is straightforward to verify that the term $\mathcal{B}$ is uniformly persistently exciting, namely  for any positive constant $\varepsilon$ there exists $T>0$ such that
$$\int_t^{t+T} \!\!\!\!\!\! \mathcal{B}(\tau)  \mathcal{B}(\tau)^\top d\tau= \int_t^{t+T}\!\!\!\!\!\! \underline{R} R_d R_d^\top \underline{R}^\top d\tau=TI_3>\varepsilon I_3.$$
Moreover from Theorem \ref{MainTh} one has that  $\norm{\mathcal{B}(t)}$ and \scalebox{0.65}{$\norm{\dfrac{\partial \mathcal{B}(t)}{\partial t}}$} remain bounded for all time.
Hence  all conditions of Theorem 1 in \cite{Loria} are satisfied, thus the claim in item 2 holds.
In order to prove that the others three equilibria $(R_{ej}^*,0)$, $j \in \{2,3,4\}$, are unstable, it suffices to prove that the origin of the linearized time-varying system is unstable.
To this purpose consider the first order approximation $R_e=R_{ej}^*(I_3+x_\times)$, $\Delta=\theta$ with $x, \,y \in \R^3$ around an equilibrium point $(R_{ej}^*, 0)$.
The linearization of \eqref{SO3-EquivalentErrorDynamics} is given by
\be\ba{rcl} \label{SO3_linearization_unstable}
\dot{x}_j&=&\Upsilon_j x_j+R_d^\top \underline{R}^\top \theta
\\[0.6em]
\dot{\theta}_j&=&2\dfrac{k_I}{k_p} \,\underline{R} R_d \Upsilon_j x_j
\ea\eeq
where $\Upsilon_{j} :=\dfrac{k_{p}}{2}  \ssum{i=1}{\nu} R_{ej}^* y_{i\times} R_{ej}^* y_{i\times}$, $\mbox{with} \> j=2,\ldots,4$.

The proof follows results derived from the Chataev's theorem \cite{Khalil}. In particular,
consider the functions
\[
\mathcal{V}_j(x_j,\theta_j)= \dfrac{k_I}{2k_p} x_j^\top \Upsilon_{j}^\top x_j -\dfrac{1}{4} \norm{\theta_j}^2,\quad j=2,\ldots,4
\]
and, for an arbitrarily small radius $r>0$,  define the set
$$\mathcal{U}_{j,r}:=\{(x_j,\theta_j)^\top  \mid \mathcal{V}_j(x_j,\theta_j)>0, \norm{x_j,\theta_j}<r \}.$$
Following the Chateav's theorem, we show now that $\Upsilon_j$ is not singular, at least one of its eigenvalues is positive and that $\mathcal{U}_{j,r}$ is non-empty  for each $j \in \{2,3,4\}$ and $r>0$.
As a matter of fact, consider the characteristic polynomial of the matrix $\Upsilon_j$.  One has
 \be\ba{rcl} \label{SO3_eigenvalues}
 \det(\Upsilon_j-\bar{\lambda}I_3)&=& \det(YR_{ej}^*-k_{p} \ssum{i=1}{\nu}  y_i^\top R_{ej}^* y_i I_3-\bar{\lambda}I_3)
 \\[0.5em]
 &=& \det(R_q  \lambda_q R_q^\top R_q \bar{R}_{j}  R_q^\top- \trace{\lambda_q  \bar{R}_{j}}I_3-\underline{\lambda} I_3)
\\[0.5em]
& = &\det(\lambda_q \bar{R}_{j}-\trace{\lambda_q \bar{R}_{j}}I_3-\underline{\lambda} I_3)
\\[0.5em]
 &=&\det(\lambda_q \bar{R}_{j}-\trace{\lambda_q \bar{R}_{j}}I_3-\bar{\lambda}I_3).
 \ea\eeq
where we have used the fact that the matrices $Y$ and $R_{ej}^*$ can be decomposed as $Y=R_q \lambda_q R_q^\top$ and $R_{ej}^*=R_q \bar{R}_{j} R_q^\top$ with $\lambda_q=\diag(\lambda_1,\lambda_2,\lambda_3)$ and
\[\ba{rcl}
\bar{R}_2=\diag(1,-1,-1), \quad \bar{R}_3=\diag(-1,1,-1), \quad \bar{R}_4=\diag(-1,-1,1).
\ea\]
From equation \eqref{SO3_eigenvalues} one gets
\[\ba{rcl}
\lambda(\Upsilon_2)&=&[\lambda_2+\lambda_3; \lambda_3-\lambda_1; \lambda_2 - \lambda_1]^\top\\
 \lambda(\Upsilon_3)&=&[\lambda_3-\lambda_2; \lambda_3+\lambda_1;  \lambda_1-\lambda_2]^\top\\
 \lambda(\Upsilon_4)&=&[\lambda_2-\lambda_3; \lambda_1-\lambda_3;  \lambda_1+\lambda_2]^\top.
 \ea\]
 Since,  by Assumption \ref{Measurement in SO3}, the eigenvalues $\lambda_1$, $\lambda_2$, $\lambda_3$ are distinct and 
 $$0 \leq \lambda_1 < \lambda_2 < \lambda_3$$ 
 it is straightforward to verify that the matrix $\Upsilon_j$ is not singular and at least one of its eigenvalues is positive.  As consequence the set $\mathcal{U}_{j,r}$ is non-empty. 
Consider now the derivatives of  $\mathcal{V}_j$ along the trajectories of the system. One has
\[
\dot{\mathcal{V}}_j (x_j, \theta_j)= \dfrac{k_I}{k_p} x_j^\top \Upsilon_j^\top \Upsilon_j x_j.
\]
Since the matrix $\Upsilon_j$ is full rank it is straightforward to verify that $\Upsilon_j^\top \Upsilon_j>0$, $\mbox{with} \> j=2,\ldots,4$. As consequence $\dot{\mathcal{V}}_j$ is positive for each $(x,\theta) \in \mathcal{U}_{j,r}$.
From this  we conclude that all the conditions of Chateav's Theorem are satisfied and thus the origin of the system \eqref{SO3_linearization_unstable} is unstable for $j \in \{2,3,4\}$.  This completes the proof.
\end{proof}
\subsection{Backstepping procedure}\label{BS}
In order to deal with the new control input $\Gamma$ starting with the virtual input $\Omega^c$, a backstepping procedure is developed.
Define $\tilde{\Omega}$ as 
$$\tilde{\Omega}=\Omega-\Omega^c.$$
Consider \eqref{Delta_TH}, \eqref{delta_TH} completed with the following control law
\begin{subequations} \label{SO3-BS_CTRL_LAW}
\begin{align}
\Gamma & = \label{SO3-BS-Gamma}
	\begin{multlined}[t]
		\Omega_\times J \Omega 
		-J \Omega_\times R^\top \Delta
		+J R^\top \dot{\Delta}
		+2 \ssum{i=1}{\nu}\left(e_i \wedge y_i^{r}\right) 
	\\ 
		+ Jk_{p} \ssum{i=1}{\nu}  y_{i\times}^r \left(y_i^r - e_i \right)_\times  (\tilde{\Omega}+\alpha)
		-k_D \tilde{\Omega}
  \end{multlined} \\
\beta &=\label{SO3-BS-Beta}
\begin{multlined}[t]
		\!\! \dfrac{k_I}{2} \left( \ssum{i=1}{\nu}\left( k_{p} R \left( y_i^r-e_i\right)_\times  y_{i\times}^r  J^\top \tilde{\Omega}  +
		R \left(e_i \wedge y_i^{r}\right)\right) \right)		
  \end{multlined}
\end{align}
\end{subequations}
with $\alpha=k_{p}\sum_{i=1}^{\nu} 0.5 \, (e_i \wedge y_i^r)$ and $k_D$ a positive arbitrary gain.
By backstepping $\tilde{\Omega}$ it turns out that control law \eqref{SO3-BS_CTRL_LAW} solves the stabilization problem as stated in the following proposition.
\begin{prop}\label{Proposition_SO3_BS}
Consider system \eqref{SO3system}, \eqref{SO3systemDynamic} along with exosystem \eqref{SO3exo-system} and let the controller be given by \eqref{SO3-BS_CTRL_LAW}, \eqref{Delta_TH}, \eqref{delta_TH}. Let Assumption \ref{Measurement in SO3} holds.Then the set
\begin{multline*}
\!\!\!\!\!\! \mathcal{S}_1=\{((R,R_d), (\delta, w), (\Omega_\times, {}^\circ \Omega_{d\times})) :\\
\in \SO(3)^2 \times \R^{2m} \times \so(3)^2 \>: \> R^\top \!R_d=X_r, \> \delta=w\ , \Omega={}^\circ \Omega_d\}
\end{multline*}
is almost globally asymptotically stable and locally exponentially stable for the closed-loop system.
\end{prop}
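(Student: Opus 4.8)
The plan is to prove Proposition \ref{Proposition_SO3_BS} by a backstepping argument layered on top of Proposition \ref{propositionSO3}. I would treat $\Omega^c$ of \eqref{eqPropositionSO3} as the virtual control that already regulates the kinematics \eqref{SO3system}, set $\tilde\Omega := \Omega - \Omega^c$, and take the augmented Lyapunov function $W := \mathcal{L} + \tfrac12\tilde\Omega^\top J\tilde\Omega$, with $\mathcal{L} = \mathcal{L}_1 + \mathcal{L}_2$ the function \eqref{LyapCandidate}. Under Assumption \ref{Measurement in SO3}, $W$ is positive definite and its zero set is exactly $\mathcal{S}_1$ (on $R^\top R_d = X_r$ the last condition defining $\mathcal{S}_1$ is equivalent to $\tilde\Omega = 0$). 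Since $\SO(3)$ is compact, $|w|$ is constant along \eqref{SO3exo-system}, and $\tilde w,\tilde\Omega$ are kept bounded by $W$, the closed loop evolves in a compact set with complete solutions, so LaSalle will apply.

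The core computation is $\dot W$ along the closed loop. Substituting $\Omega = \Omega^c + \tilde\Omega$ in $\dot R = R\Omega_\times$, the error dynamics $\dot R_e$ differs from the one in Proposition \ref{propositionSO3} by a term proportional to $\tilde\Omega$, which through $\dot{\mathcal{L}}_1$ contributes a single cross term linear in $\tilde\Omega$; this is precisely what the feedforward $2\sum_i(e_i\wedge y_i^r)$ in \eqref{SO3-BS-Gamma} compensates. For the added term, $\frac{d}{dt}(\tfrac12\tilde\Omega^\top J\tilde\Omega) = \tilde\Omega^\top(-\Omega_\times J\Omega + \Gamma - J\dot\Omega^c)$; expanding $J\dot\Omega^c$ from \eqref{eqPropositionSO3} as $J\frac{d}{dt}(\Ad_{R^\top}\Delta) + J\dot\alpha = -J\Omega_\times R^\top\Delta + JR^\top\dot\Delta + Jk_p\sum_i y_{i\times}^r(y_i^r-e_i)_\times(\tilde\Omega+\alpha)$ (the last term from the dependence of $\alpha$ on the $e_i$'s via $\dot R_e$), one recognizes \eqref{SO3-BS-Gamma} as $\Gamma = \Omega_\times J\Omega + J\dot\Omega^c - (\text{cross term of }\dot{\mathcal{L}}_1) - k_D\tilde\Omega$, modulo the coupling through the internal model: $\dot\Delta$ in the feedforward carries $\beta$ via \eqref{delta_TH}, and $\dot{\mathcal{L}}_2$ also depends on $\beta$, and the modified innovation \eqref{SO3-BS-Beta} is exactly the one that makes these two $\beta$-contributions cancel. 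After simplification,
\[
\dot W = -k_p\ssum{i=1}{\nu}\norm{\pr\!\left((R_e y_i^r e_i^\top)^\top\right)}^2 - k_D\norm{\tilde\Omega}^2 \le 0 .
\]
I expect this exact cancellation — checking that the $\tilde\Omega$-dependent part of \eqref{SO3-BS-Beta} appearing in $\dot{\mathcal{L}}_2$ matches the corresponding piece of $JR^\top\dot\Delta$ inside $\tilde\Omega^\top\Gamma$ — to be the main obstacle; the remainder is routine chain-rule bookkeeping.

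Finally I would invoke LaSalle. On the largest invariant set in $\{\dot W = 0\}$ one has $\tilde\Omega\equiv 0$ and $\pr((R_e y_i^r e_i^\top)^\top)=0$ for all $i$; the internal-model argument of Theorem \ref{MainTh} then still yields $\tilde\Delta = 0$ on this set, so the equilibrium characterization of Proposition \ref{propositionSO3} carries over and the invariant set reduces to the four points $R_e^*\in\{I_3,R_{e2}^*,R_{e3}^*,R_{e4}^*\}$ with $\delta = w$, $\tilde\Omega = 0$. For almost global attractivity it remains to show the three equilibria with $R_e^*\neq I_3$ are unstable; this is a Chetaev argument as in Proposition \ref{propositionSO3}, the extra exponentially damped block $J\dot{\tilde\Omega} = -k_D\tilde\Omega + (\text{bounded linear coupling})$ not destroying the positive eigenvalue of the $\Upsilon_j$-subsystem, so a non-empty cone with positive Chetaev derivative persists. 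Local exponential stability of $\mathcal{S}_1$ follows by linearizing at $(R_e,\tilde\Omega)=(I_3,0)$: the $(x,\theta)$-block is the uniformly exponentially stable system \eqref{Loria_standard_form} of Proposition \ref{propositionSO3} in feedback interconnection with the exponentially stable linear $\tilde\Omega$-dynamics, and a standard composite-Lyapunov / interconnection estimate gives uniform exponential stability of the linearization, hence of $\mathcal{S}_1$. The stability conclusions are thus inherited from Proposition \ref{propositionSO3} once the $\dot W$ identity above is established.
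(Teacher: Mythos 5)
Your proposal follows essentially the same route as the paper: the paper's proof uses exactly the augmented Lyapunov function $\mathcal{L}_{bs}=\mathcal{L}+\tfrac12\tilde\Omega^\top J\tilde\Omega$, performs the same expansion of $J\dot\Omega^c$ and the same cancellation of the $\beta$- and cross-terms via \eqref{SO3-BS-Gamma} and \eqref{SO3-BS-Beta} to reach $\dot{\mathcal{L}}_{bs}=-\tfrac{k_p}{4}\sum_i\norm{(R_ey_i^r\wedge y_i^r)_\times}^2-k_D\tilde\Omega^2$, and then defers to ``similar arguments to Proposition \ref{propositionSO3}'' for the LaSalle/Chetaev/linearization conclusions that you spell out in somewhat more detail. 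Your write-up is correct and, if anything, slightly more explicit than the paper about how the instability and local exponential stability arguments carry over to the augmented system.
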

\begin{proof}
Consider the following function
$$\mathcal{L}_{bs}=\mathcal{L}+\dfrac{1}{2}\tilde{\Omega}^\top J \tilde{\Omega}$$
where $\mathcal{L}$ is the Lyapunov function defined in \eqref{LyapCandidate}.  It is straightforward to verify that under assumption \ref{Measurement in SO3} the function $\mathcal{L}_{bs}$ is definite positive respect to the compact set $S_1$. 
Differentiating $\mathcal{L}_{bs}$ along the solutions of the closed-loop system and bearing in mind the expression of $\Omega^c$, one has
\[\ba{rcl}
\dot{\mathcal{L}}_{bs} &=&- \dfrac{k_{p}}{4} \ssum{i=1}{\nu}  \norm{ \left(R_e y_i^{r} \wedge y_i^{r}\right)_\times}^2
+ \dfrac{1}{k_I} \tilde{w}^\top \left(Sw-\dot{\delta}\right) 
\\[0.7em]
&&+ \dfrac{1}{2} tr \left( \tilde{\Omega}_\times^\top  \ssum{i=1}{\nu}\left(R_e y_i^{r} \wedge y_i^{r}\right)_\times \right)
\\[0.7em]
&&-\dfrac{1}{2} tr \left( \tilde{\Delta}_\times^\top Ad_{R} \ssum{i=1}{\nu} \left(R_e y_i^{r} \wedge y_i^{r}\right)_\times \right) + \tilde{\Omega}^\top \left( -\Omega_\times J \Omega + \Gamma - J \dot{\Omega^c} \right).
\ea\]
Recalling the expression of $\Omega_\times^c$ in vectorial form and differentiating it, one obtains
\[\ba{rcl} 
\dot{\Omega}^c&=&
- \Omega_\times R^\top \Delta 
+ R^\top \dot{\Delta}
+\dfrac{k_{p}}{2} \ssum{i=1}{\nu}  y_{i\times}^r \left(R_e y_i^r\right)_\times \left(\tilde{\Omega}+\alpha \right)
\\
&&-\dfrac{k_{p}}{2} \ssum{i=1}{\nu}  y_{i\times}^r \left(R_e y_i^r\right)_\times R^\top \tilde{\Delta}.
\ea\]
Substituting the expression of $\dot{\Omega}^c$ in the Lyapunov function and recalling the expression of $\Gamma$ \eqref{SO3-BS-Gamma} and $\dot{\delta}$ \eqref{delta_TH}, one has
\[\ba{rcl}
\dot{\mathcal{L}}_{bs} &=& -\dfrac{k_{p}}{4} \ssum{i=1}{\nu}  \norm{ \left(R_e y_i^{r} \wedge y_i^{r}\right)_\times}^2
-k_D \tilde{\Omega}^2

\\
&&  + \tilde{\Delta}^\top \Bigg(k_{p} \ssum{i=1}{\nu} R \left(R_e y_i^r\right)_\times  y_{i\times}^r  J^\top \tilde{\Omega} -R \ssum{i=1}{\nu} \left(R_e y_i^{r} \wedge y_i^{r}\right)- \dfrac{2}{k_I} \beta \Bigg).

\ea\]
%
%
%
%
%
%
Introducing the expression of $\beta$ \eqref{SO3-BS-Beta} in the  above expression one obtains
\[
\dot{\mathcal{L}}_{bs} =- \dfrac{k_{p}}{4} \ssum{i=1}{\nu}  \norm{ \left(R_e y_i^{r} \wedge y_i^{r}\right)_\times}^2
-k_D \tilde{\Omega}^2.
\]
It follows that the compact set $S_1$ is stable in the sense of Lyapunov and that $\tilde{\Omega}$ converges to zero.
The proof can be completed using similar arguments to Proposition \ref{propositionSO3}.
\end{proof}
\section{Simulation Results}\label{Simulation Results}
As illustrative example we consider two rigid bodies posed on $\SO(3)$. The first one, the simulated system, is modeled as fully actuated dynamic system \eqref{SO3systemDynamic} while the second one, the simulated exosystem, is modeled as kinematic system \eqref{SO3exo-system}.
The reference directions considered in the simulations are $\mathring y_1=[1, 0, 0]^\top$ and $\mathring y_2=[0,1,0]^\top$.
Initial states of the simulated system are chosen as $R(0)=I_3$, $\Omega(0)=0$ while the controller gains are chosen to be $k_p=k_d=2$ and $k_I=0.4$.
The inertia matrix in the body-fixed frame of the system is that of an non-axisymmetric rigid body $J = \diag(2, 1.5, 1)$[kg m$^2$].
The desired velocity in the inertial frame was chosen as ${}^\circ\Omega_d=(\cos(t), 2\cos(5t),3\cos(7t))\> [\mbox{rad s}^{-1}]$.
The initial yaw, pitch and roll of the simulated exosystem are chosen respectively as $180^\circ$, $45^\circ$ and $45^\circ$.
Figure \ref{so3error} and Figure \ref{so3errorZoomed} show the evolution of $\trace{I-R_e}$ and the relative velocity error $\norm{\tilde{\Omega}}$.
Plots show that, in steady state, the velocity of the simulated system $\Omega$ converges to $\Omega^c$ and the relative error $R_e$ converges to the identity element of the group.
\begin{figure}[H]
\begin{center}
    \includegraphics[trim = 13mm 0mm 0mm 0mm, width=.65\textwidth]{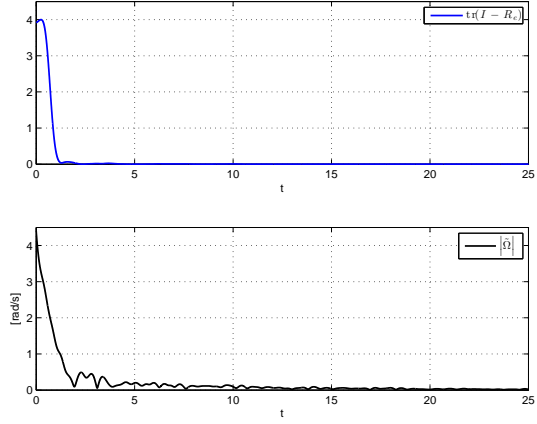}
    \caption{Evolution of the relative Error $R_e$ and velocity error $\tilde{\Omega}$ in $\SO(3)$. Case of perfect knowledge of the Inertia matrix.}
    \label{so3error}
\end{center}
\end{figure}
\begin{figure}[H]
\begin{center}
    \includegraphics[trim = 13mm 0mm 0mm 0mm, width=.65\textwidth]{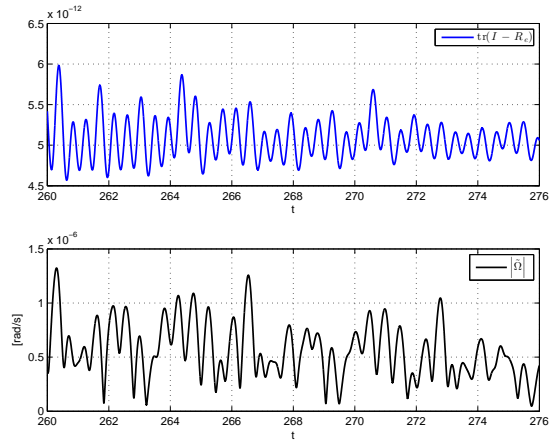}
    \caption{Evolution of the relative Error $R_e$ and velocity error $\tilde{\Omega}$ in $\SO(3)$ after 260 seconds. Case of perfect knowledge of the Inertia matrix.}
    \label{so3errorZoomed}
\end{center}
\end{figure}

A second simulation was run,  considering a slight unknown variation in the inertia tensor, to have a numerical hint on the robustness properties of the proposed control law.
The inertia matrix of the implemented control law is chosen as $J_{nom} = \diag(2, 1.5, 1)$[kg m$^2$] while the real value is $J_{real} = \diag(2.2, 1.3, 1.1)$[kg m$^2$].
Figure \ref{so3errorNoRobust} and Figure \ref{so3errorNoRobustZoom} show that only practical regulation can be achieved in presence of uncertainties in the inertia tensor.
\begin{figure}[H]
\begin{center}
    \includegraphics[trim = 13mm 0mm 0mm 0mm, width=.65\textwidth]{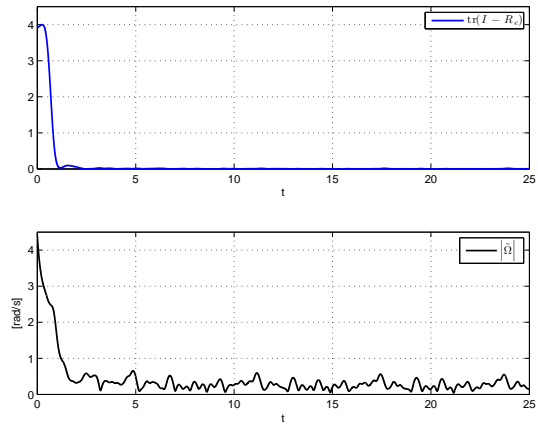}
    \caption{Evolution of the relative Error $R_e$ and velocity error $\tilde{\Omega}$ in $\SO(3)$. Case of uncertainties in the Inertia.}
    \label{so3errorNoRobust}
\end{center}
\end{figure}
\begin{figure}[H]
\begin{center}
    \includegraphics[trim = 13mm 0mm 0mm 0mm, width=.65\textwidth]{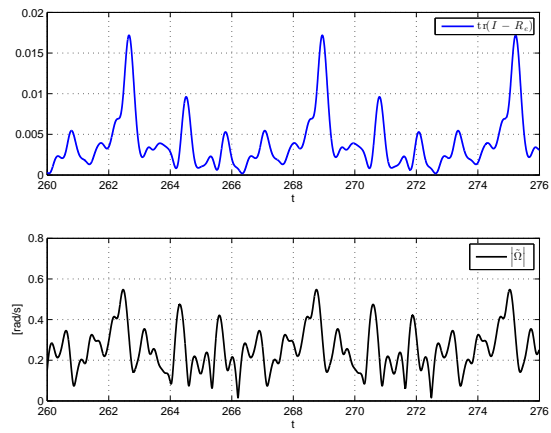}
    \caption{Evolution of the relative Error $R_e$ and velocity error $\tilde{\Omega}$ in $\SO(3)$ after 260 second. Case of uncertainties in the Inertia.}
    \label{so3errorNoRobustZoom}
\end{center}
\end{figure}

\section{Conclusion}
In this paper, the problem of output regulation for systems defined on matrix Lie-Groups with invariant measurements was considered. 
The proposed control law structure embeds a copy of the exosystem kinematic updated by means of error measurements.
A rigorous stability analysis was provided for both the general case and the particular case of systems posed on the special orthogonal group $\SO(3)$.
Future works will consider the problem of robustness with respect to uncertainties in the parameters (for instance, in the inertia tensor) and external unknown disturbances. Note, in fact, that the control law proposed in \eqref{SO3-BS_CTRL_LAW} assumes a perfect knowledge of them.

\section*{Acknowledgement}                             
This research was supported by integrated project SHERPA (G.A. 600958) supported by the European Community under the 7th Framework Programme, and by the Australian Research Council and MadJInnovation through the Linkage grant LP110200768.

\bibliographystyle{plain}        
\bibliography{biblio}           
                                 

\end{document}